\theoremstyle{plain}
\newtheorem{theorem}{Theorem}[section]
\newtheorem{proposition}[theorem]{Proposition}
\theoremstyle{definition}
\newcommand {\Set}[1] {\mathbb{#1}}
\newcommand{\setR}[0]{\Set{R}}
\newcommand{\slaz}[0]{\setminus \{0\}}
\newcommand{\ttt}[0]{\widetilde}
\newcommand{\thCite}[1]{\emph{#1}}
\newcommand{\pd}[2]{\frac{\partial #1}{\partial #2}}
\newcommand{\pdd}[3]{\frac{\partial^2 #1}{\partial #2\,\partial #3}}
\newcommand{\od}[2]{\frac{d #1}{d #2}}
\title{On the tensorial properties of the generalized Jacobi equation}
\author[Dahl]{Matias F. Dahl}
\address{
Matias Dahl,
Aalto University,
Institute of Mathematics,
P.O. Box 11100,
FI-00076 Aalto,
Finland
}
\email{matias.dahl@aalto.fi}
\author[Gallego Torrom\'e]{Ricardo Gallego Torrom\'e}
\address{
Ricardo Gallego Torrom\'e,
Instituto de matem\'atica e estat\'istica - USP, S\~ao Paulo, Brazil}
\email{rgallegot@gmx.de}
\begin{document}

\maketitle
\begin{abstract}
  The generalized Jacobi equation is a differential equation in local coordinates that describes the
  behavior of infinitesimally close geodesics with an arbitrary relative velocity.  In this note we
  study some transformation properties for solutions to this equation.  We prove two results.  First,
  under any affine coordinate changes we show that the tensor transformation rule maps solutions to
  solutions.  As a consequence, the generalized Jacobi equation is a tensor equation when restricted
  to suitable Fermi coordinate systems along a geodesic.  Second, in dimensions $n\ge 3$, we
  explicitly show that the transformation rule does not in general preserve solutions to the
  generalized Jacobi equation.
\end{abstract}
\bigskip

\section{Introduction}
\label{sec:Introduction}
Suppose $M$ is a manifold with an affine and torsion free connection $\nabla$.  In this setting, the
{\it Jacobi equation} is a fundamental equation that describes the qualitative behavior of
infinitesimally close geodesics on $M$. See for example, \cite{Levi-Civita} and \cite{Hicks}.  One
way to derive the Jacobi equation is to consider the deviation $\xi^\mu(s)= x^\mu(s)-X^\mu(s)$
between two neighboring geodesic $X$ and $x$. Then the Jacobi equation follows by subtracting the geodesic
equations for $X$ and $x$ and assuming that $\xi^\mu(s)$ and $\dot \xi^\mu(s)=\od{\xi^\mu}{s}(s)$
are both infinitesimal quantities. This is a standard argument. See for example \cite{Perlick07} and
references therein.  The \emph{generalized Jacobi equation} is derived in the same way, but under
the weaker hypothesis that only $\xi^\mu(s)$ is an infinitesimal quantity. That is, the geodesics
are assumed to be infinitesimally close, but their relative velocity $\dot \xi^\mu$ does not need to
be small. Under this weaker hypothesis, equations for the displacement $\xi^\mu$ was derived
first in the Lorentzian case by Hodgkinson \cite{Hodgkinson} and independently by Ciufolini
\cite{Ciufolini}.  This generalization of the Jacobi equation have been investigated by several
authors, specially with applications on astrophysics and cosmology. See for instance
\cite{Mashhoon1, ChiconeMashhoon:2002, Perlick07}.

We will work in the setting of an affine and torsion-free connection. In this setting, the
generalized Jacobi equation was derived by Perlick \cite{Perlick07}.  In more detail, a collection
of $n$ functions $\xi^{\mu}\colon I\to \setR$ along a geodesic $X\colon I\to M$ is a solution to the
\emph{generalized Jacobi field} in coordinates $\{x^\mu\}_{\mu=0}^{n-1}$ if
\begin{eqnarray}
\label{eq:genJacobiEq}
\ddot \xi^{\mu}
+ \Gamma^{\mu}_{\nu\rho} \left( 2 \dot \xi^{\rho} \dot X^{\nu} + \dot \xi^{\rho} \dot \xi^{\nu}\right)
+ \pd{\Gamma^{\mu}_{\rho \nu}}{x^\tau} \xi^\tau \left( \dot X^{\rho}
 + \dot \xi^{\rho}\right) \left( \dot X^{\nu} + \dot \xi^{\nu}\right) &=& 0,
\end{eqnarray}
where dots indicate ordinary derivatives with respect to the parameter $s$ of the central geodesic
$X^\mu(s)$ and $\Gamma^\mu_{\nu\rho}$ are the connection coefficients of $\nabla$.

Let us first observe that in contrast to the usual Jacobi equation, the generalized Jacobi equation
is a non-linear equation in the unknown functions $\xi^\mu$.  This makes the analysis of the
solution space more difficult.  For example, the solution space need not be a vector space and in
general, there are no (known) results that associate a solution $\xi^\mu$ to a geodesic variation as
for the Jacobi equation.

The generalized Jacobi equation \eqref{eq:genJacobiEq} is an equation in local coordinates.  If $\nabla$ is \emph{flat},
there exits coordinates where $\Gamma^{\mu}_{\nu\rho}=0$ \cite[Proposition 1.1]{Shima:2007}.  Then
the generalized Jacobi equation and the usual Jacobi equation both simplify into $\ddot
\xi^{\mu}=0$. More generally, Perlick \cite{Perlick07} has proven that if $X$ is a lightlike
geodesic in a special class of Lorentz metrics, or \emph{planewave metrics}, then there are coordinates
around $X$ where functions $\xi^\mu$ satisfy equation \eqref{eq:genJacobiEq} if and only if
functions $\xi^\mu$ satisfy the usual Jacobi equation.
When this is the case, it implies that by a suitable choice of coordinates, the nonlinear
generalized Jacobi equation can be replaced by the linear Jacobi equation.  Since the Jacobi
equation transforms tensorially, this motivates a further understanding of the transformation
properties of the generalized Jacobi equation. Namely:
\begin{enumerate}
\item [] Suppose $X\colon I\to M$ is a geodesic in overlapping coordinates
  $x^\mu$ and $\widetilde x^\mu$. If functions $\xi^\mu\colon I\to \setR$ solve equation
  \eqref{eq:genJacobiEq} in coordinates $x^\mu$, does there exist a
  transformation rule $\xi^\mu\mapsto \widetilde \xi^\mu$ such that functions $\widetilde \xi^\mu$
  solve equation \eqref{eq:genJacobiEq} in coordinates $\widetilde x^{\mu}$?
\end{enumerate}
In this note we will study this question for the tensorial transformation rule $\widetilde\xi^\mu =
\pd{\widetilde x^\mu}{x^\nu} \xi^\nu$.  We will prove two results.  First, in \thCite{proposition}
\ref{thm:pos} we show that the tensorial transformation rule for $\xi^i$ preserves solutions to
equation \eqref{eq:genJacobiEq} for affine coordinate changes.  As a consequence, equation
\eqref{eq:genJacobiEq} is a tensorial equation when restricted to suitable Fermi coordinate systems
along a geodesic (see Proposition \ref{prop:FermiTrans}).  This motivates the use of Fermi
coordinates for the study of the generalized Jacobi equation as in \cite{ChiconeMashhoon:2002,
  Perlick07}.
Second, in \thCite{proposition} \ref{thm:neg} we explicitly show that in
dimensions $n\ge 3$, the tensorial transformation rule for $\xi^{\mu}$ does not in general preserve solutions
to equation \eqref{eq:genJacobiEq}. Thus, if there exists a
transformation rule $\xi^\mu\to \widetilde \xi^\mu$ for solutions to equation
\eqref{eq:genJacobiEq}, it is not the tensorial transformation rule.

The organization of this note is as follows.  In Section \ref{sec:Preliminaries} we review the
necessary theory for affine connections and Fermi coordinates. In Section
\ref{sec:EqsGeodesicDeviation} we summarize the derivation of the Jacobi equation and the
generalized Jacobi equation \cite{Levi-Civita}. Lastly, in Section \ref{sec:GJE-transformation} we
prove \thCite{proposition} \ref{thm:pos} and \thCite{proposition} \ref{thm:neg} described above.

\section{Preliminaries}
\label{sec:Preliminaries}
Let $M$ be a smooth manifold of dimension $n \ge 2$. By $TM$ we denote the tangent bundle with
projection $\pi\colon TM\to M$.  The tangent space at a point $p\in M$ is defined as
$T_pM=\pi^{-1}(p)$.  Throughout the paper we denote by $I$ an interval in $\setR$. We will also use
the Einstein summing convention.

We assume that $M$ is endowed with an \emph{affine connection} $\nabla$. Thus, in each
coordinate chart $(U, x^\mu)$, $\nabla$ is represented by connection coefficients
 $\Gamma^\mu_{\nu \sigma}$ and if
 $\Gamma^\mu_{\nu \sigma}$ and  $\widetilde \Gamma^\mu_{\nu \sigma}$ represent $\nabla$ on
overlapping coordinates $x^\mu$ and $\widetilde x^\mu$, we have transformation rules
\begin{eqnarray}
\label{eq:gammaSymbTrans}
\pd{\widetilde x^\lambda}{x^\alpha}    \Gamma^\alpha_{\nu\sigma}
&=&
\widetilde \Gamma^\lambda_{\alpha\beta}   \pd{\widetilde x^\alpha}{x^\nu}  \pd{\widetilde x^\beta}{x^\sigma}
+\pdd{\widetilde x^\lambda}{x^\nu}{x^\sigma}.
\end{eqnarray}
A connection is \emph{torsion-free} if  $\Gamma^\mu_{\nu \sigma}=\Gamma^\mu_{\sigma\nu}$.
If $X$ is a curve $X\colon I\to M$, then $X$ is an \emph{(affinely parameterized) geodesic}  if
\emph{(i)} $X$ is a \emph{regular curve}, that is, the tangent $\dot X$ is never zero, and
\emph{(ii)} in each coordinate chart  $(U,x^{\mu})$ that overlaps $X$ we have
\begin{eqnarray}
\label{eq:geodEq}
\ddot X^\mu + \Gamma^\mu_{\nu\sigma}(X) \dot X^\nu \dot X^\sigma &=& 0,
\end{eqnarray}
where $(X^\mu(s))$ are components for $X$ in coordinates $x^\mu$.

\subsection{Fermi coordinates}
\label{sec:FermiCoords}
\newcommand {\iInt}[0] {\mathcal{I}} In this section we collect some results on  Fermi
coordinates. These are local coordinates for a tubular neighborhood around a geodesic
\cite{ManasseMisner:1963}.  Suppose $\iInt\subset \setR$ is an open interval and
$X\colon \iInt{}\to M$ is a geodesic of
an affine connection on $M$,
and suppose $\Pi$ is an $(n-1)$-dimensional vector space in $T_{X(s_0)}M$
for some $s_0\in \iInt{}$ such that $\Pi$ is complementary to $\dot X(s_0)$, that is,
\begin{eqnarray}
\label{eq:TXS0Span}
   T_{X(s_0)} M &=& \operatorname{span}\{ \dot X(s_0)\} \,\,\oplus \,\, \Pi.
\end{eqnarray}
Let $\{e_1, \ldots, e_{n-1}\}$ be a basis for $\Pi$. By parallel transport we can extend each vector
$e_i\in T_{X(s_0)}M$ into a vector field $e_i(s)$ along $X$. Since parallel transport is a linear
isomorphism, it follows that
\begin{eqnarray}
\label{eq:TXS0SpanTr}
T_{X(s)} M &=& \operatorname{span}\{ \dot X(s), e_1(s), \ldots, e_{n-1}(s)\}
\end{eqnarray}
for all $s\in \iInt{}$. For $s\in \iInt{}$, let $f$ be the map
\begin{eqnarray}
\label{eq:fermiDef}
 f(s, z^1, \ldots, z^{n-1}) &=& \exp\left\{\sum_{i=1}^{n-1}
z^{i} e_{i}(s) \right\}
\end{eqnarray}
defined for $z^1, \ldots, z^{n-1}\in \setR$ for which the right hand side
is defined.
\begin{proposition}
\label{prop:Fermi}
Suppose $\iInt{}$ is an open interval, 
$X\colon \iInt{} \to M$ is a geodesic for an affine connection $\nabla$, and
$\{e_{i}\}_{{i}=1}^{n-1}$ and $f$ are as in equation \eqref{eq:fermiDef}.  Moreover, suppose $X$ has no
self-intersections, and $I$ is a proper open subset $I\subset \iInt{}$ such that $\overline{I}$ is
compact.
Then there exists an open neighborhood of the origin $B\subset \setR^{n-1}$ such that $f$ restricts
to a diffeomorphism $f\colon I\times B \to M$ onto its range.
\end{proposition}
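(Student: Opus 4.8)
The plan is to show that $f$ is a local diffeomorphism near $\overline{I}\times\{0\}$ and then upgrade this to a genuine diffeomorphism onto its range by exploiting compactness of $\overline{I}$ together with the hypothesis that $X$ has no self-intersections. First I would verify that $f$ is smooth on an open neighborhood of $\iInt{}\times\{0\}$ in $\iInt{}\times\setR^{n-1}$; this follows because the exponential map of an affine connection is smooth on an open neighborhood of the zero section of $TM$, the vector fields $e_i(s)$ depend smoothly on $s$ (parallel transport along $X$ is the solution of a linear ODE with smooth coefficients), and $f(s,0,\dots,0)=X(s)$ is defined for all $s\in\iInt{}$.

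Next I would compute the differential of $f$ at a point $(s,0,\dots,0)$. Using $\exp_p(0)=p$ and the standard fact that the differential of $\exp_p$ at $0\in T_pM$ is the canonical identification $T_pM\to T_pM$, one gets $\partial f/\partial z^i|_{(s,0)}=e_i(s)$ and $\partial f/\partial s|_{(s,0)}=\dot X(s)$. By equation \eqref{eq:TXS0SpanTr}, these $n$ vectors form a basis of $T_{X(s)}M$, so $df_{(s,0)}$ is invertible. By the inverse function theorem, for each $s\in\overline{I}$ there is an open box $J_s\times B_s$ (with $J_s\ni s$ an open subinterval of $\iInt{}$, $B_s\subset\setR^{n-1}$ an open ball about $0$) on which $f$ restricts to a diffeomorphism onto an open subset of $M$.

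Now I would pass to the global statement. Cover the compact set $\overline{I}$ by finitely many of the intervals $J_s$; a Lebesgue-number / tube-lemma argument then produces a single open ball $B_0\subset\setR^{n-1}$ about the origin and an open interval $\iInt{}' $ with $\overline{I}\subset\iInt{}'\subset\iInt{}$ such that $f$ is a local diffeomorphism on $\iInt{}'\times B_0$ and, moreover, is injective on $\{s\}\times B_0$ for each fixed $s$ (injectivity of $\exp$ on a small ball, uniform in $s\in\overline{I}$ by compactness). The remaining issue is \emph{global} injectivity of $f$ on $I\times B$ for a possibly smaller ball $B$: one must rule out $f(s,z)=f(s',z')$ with $s\ne s'$. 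Here is where I expect the main obstacle to lie. The idea is a contradiction-by-compactness argument: if no such $B$ worked, there would be sequences $(s_k,z_k)\ne(s_k',z_k')$ in $I\times B_{1/k}$ with $f(s_k,z_k)=f(s_k',z_k')$ and $z_k,z_k'\to 0$; passing to convergent subsequences $s_k\to s_*$, $s_k'\to s_*'$ in $\overline{I}$, continuity of $f$ gives $X(s_*)=X(s_*')$, so $s_*=s_*'$ since $X$ has no self-intersections, and then for large $k$ all four points lie in a single box $J_{s_*}\times B_{s_*}$ on which $f$ is injective — contradiction. Finally, choosing $B:=B_0\cap B_{1/k}$ for $k$ large and shrinking $\iInt{}'$ back down to an open set still containing $\overline{I}$ but contained in the original $\iInt{}$, and then noting that $\overline{I}$ compact $\subset \iInt{}'$ open lets us also arrange $I\subset\iInt{}'$: the restriction $f\colon I\times B\to M$ is a smooth injective local diffeomorphism, hence a diffeomorphism onto its (open) range. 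The one point requiring care throughout is the uniformity in $s$ of all the "small ball" choices, which is exactly what compactness of $\overline{I}$ buys us.
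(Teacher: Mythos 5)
Your argument is correct and follows essentially the same route as the paper: smoothness of $f$ on a set of the form $I\times B$, invertibility of $df$ along the central curve $\{z=0\}$ via the inverse function theorem, and then compactness of $\overline{I}$ together with the absence of self-intersections to upgrade the local diffeomorphism to a diffeomorphism onto its range. The only difference is that the paper delegates the final globalization step (a local diffeomorphism that is injective on a compact set is injective on a neighborhood of it) to a cited lemma of Spivak, whereas you prove that lemma directly with the standard sequence-and-contradiction argument.
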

\begin{proof}
  Since $\overline{I}$ is compact, we can find an open ball $B\subset \setR^{n-1}$ containing $0$
  such that $f\colon I\times B\to M$ is smooth.  For $s\in I$ we have
\begin{eqnarray*}
  \pd{f}{s}(s,0,\ldots, 0)&=&\dot{X}(s),\\
  \pd{f}{z^{i}}(s,0,\ldots, 0)&=& e_{i}(s), \quad i\in \{1, \ldots, n-1\}.
\end{eqnarray*}
Since $\{\dot{X}(s), e_1(s), \ldots, e_{n-1}(s)\}$ are linearly independent for all $s\in I$,
the inverse function theorem implies (after possibly shrinking $B$) that
$f\colon I\times B\to M$ is a local diffeomorphism onto its range.  The result follows by
\cite[p.~345, Lemma 19]{spivakI}.
\end{proof}
When $(s,z^1, \ldots, z^{n-1})\in I\times B$ are as in Proposition \ref{prop:Fermi} we say that
$(s,z^1, \ldots, z^{n-1})$ are \emph{Fermi coordinates} along $X\colon I \to M$. These
coordinates are determined by the
geodesic $X\colon I\to M$, the initial point $X(s_0)$ and the set of vectors
$\{e_i(s_0)\}^{n-1}_{i=1}$ in equation \eqref{eq:TXS0SpanTr}.
One can prove that when the connection is torsion-free, then all Christoffel symbols
$\Gamma^\mu_{\nu\rho}$ vanish on the central geodesic in Fermi coordinates.
The next proposition shows that if two Fermi coordinates systems are determined by the same initial
complementary hyperplane $\Pi$, then the Fermi coordinates differ by an affine coordinate
transformation. Here, two overlapping coordinates $(U, x^{\mu})$ and $(\widetilde U, \widetilde
x^\mu)$ are related by an \emph{affine coordinate transformation} if
\begin{eqnarray}
\label{eq:affineTrans}
\widetilde x^{\mu}(x^0, \ldots, x^{n-1}) &=& \Lambda^{\mu}_{\nu} x^{\nu} + C^{\mu}
\end{eqnarray}
for some constants $(\Lambda^{\mu}_{\nu})_{\mu,\nu=0}^{n-1}$ and $(C^{\mu})_{\mu=0}^{n-1}$.

\begin{proposition}
\label{prop:FermiTrans}
Suppose $\nabla$ is an affine connection on $M$ and $X\colon I\to M$ and $\widetilde{X}\colon
\widetilde{I}\to \,M$ are two geodesics that differ by a reparameterization.
Moreover, suppose
$(s, z^1, \ldots, z^{n-1})$ and $(\widetilde{s}, \widetilde z^1, \ldots, \widetilde z^{n-1})$ are Fermi
coordinates along $X$ and $\widetilde X$ that correspond to the same initial complementary hyperplane $\Pi$.
Then Fermi coordinates
$(s, z^1, \ldots, z^{n-1})$ and $(\widetilde{s}, \widetilde z^1, \ldots, \widetilde z^{n-1})$ are related by an affine coordinate transformation on their common domain.
More precisely, there are constants $A,B$ and an invertible matrix
$(T^{\nu}_{\nu})_{\mu,\nu=1}^{n-1}$ such that
\begin{eqnarray}
\label{eq:FermiAffineRelation}
\widetilde s = As +B, \quad
\widetilde z^\mu = \sum_{\nu=1}^{n-1} T^\mu_\nu z^\nu\,\, \mbox{for}\,\,
\mu\in \{1, \ldots, n-1\}.
\end{eqnarray}
\end{proposition}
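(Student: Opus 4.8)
The plan is to reduce the statement to two standard facts about affine connections: a reparameterization carrying one affinely parameterized geodesic onto another is necessarily affine, and parallel transport is a linear isomorphism that is insensitive to the parameterization of the transporting curve. Granting these, the affine structure of the coordinate change is essentially forced. To begin, write $\widetilde X(\widetilde s)=X(h(\widetilde s))$ for a diffeomorphism $h$ between the relevant intervals. Differentiating twice and inserting into the geodesic equation \eqref{eq:geodEq} for $\widetilde X$ gives, by the chain rule, $h''(\widetilde s)\,\dot X^\mu(h(\widetilde s))+(h'(\widetilde s))^{2}\bigl(\ddot X^\mu+\Gamma^\mu_{\nu\sigma}(X)\dot X^\nu\dot X^\sigma\bigr)(h(\widetilde s))=0$; the parenthesized term vanishes since $X$ is a geodesic, and $\dot X$ is nowhere zero, so $h''\equiv 0$ and $h$ is affine. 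Inverting, $\widetilde s=As+B$ with $A\neq 0$ and $B$ constant, which is the first relation in \eqref{eq:FermiAffineRelation}. Since $\Pi$ lies in a single tangent space $T_{p}M$, both Fermi systems have $p=X(s_0)=\widetilde X(\widetilde s_0)$ as their base point (consistent because $X$ and $\widetilde X$ have the same image), and $\Pi$ is complementary both to $\dot X(s_0)$ and to the proportional vector $\dot{\widetilde X}(\widetilde s_0)=A^{-1}\dot X(s_0)$.

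Next, let $\{e_i\}$ and $\{\widetilde e_i\}$ be the bases of $\Pi$ used to construct the two systems and write $\widetilde e_i=\sum_k S^k_i\,e_k$ for an invertible matrix $(S^k_i)$. Denoting by $P^{s}$ parallel transport from $p$ to $X(s)$ along $X$, linearity of parallel transport and its independence of parameterization give $e_i(s)=P^{s}e_i$ and $\widetilde e_i(\widetilde s)=P^{s}\widetilde e_i=\sum_k S^k_i\,e_k(s)$ whenever $\widetilde s=As+B$. Now apply Proposition \ref{prop:Fermi} to each system so that $f$ and $\widetilde f$ are diffeomorphisms onto open subsets of $M$, and work on the preimage under $f$ of the intersection of the two ranges. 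A point $q$ there with $X$-Fermi coordinates $(s,z^1,\ldots,z^{n-1})$ satisfies $q=\exp_{X(s)}\bigl(\sum_i z^i e_i(s)\bigr)$; taking $\widetilde s=As+B$ so that $\widetilde X(\widetilde s)=X(s)$ and using that $\exp_{X(s)}$ is a local diffeomorphism, hence injective near the origin, the $\widetilde X$-Fermi coordinates $(\widetilde s,\widetilde z^1,\ldots,\widetilde z^{n-1})$ of $q$ are characterized by $\sum_i\widetilde z^i\,\widetilde e_i(\widetilde s)=\sum_i z^i e_i(s)$.

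Substituting the frame relation from the previous step and using linear independence of $\{e_k(s)\}$ yields $z^k=\sum_i S^k_i\widetilde z^i$, that is $\widetilde z^\mu=\sum_\nu T^\mu_\nu z^\nu$ with $(T^\mu_\nu)=(S^k_i)^{-1}$, which is the second relation in \eqref{eq:FermiAffineRelation}; together with $\widetilde s=As+B$ this exhibits the coordinate change as affine. I expect the only real obstacle to be bookkeeping rather than substance: one must carry out the identification in the last step on the genuinely common open domain of the two charts, and keep the index placement in $\widetilde e_i=\sum_k S^k_i e_k$ compatible with the form of the conclusion. Once the two facts quoted at the outset are established, the geometry leaves no freedom.
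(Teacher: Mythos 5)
Your proof is correct and follows essentially the same route as the paper: extract the affine form $\widetilde s = As+B$ of the reparameterization from the geodesic equation, relate the two parallel frames along the common geodesic by a constant invertible matrix, and use injectivity of the Fermi chart maps from Proposition \ref{prop:Fermi} to force the coordinate change to be the stated affine one. The only cosmetic difference is that you pin down the transverse coordinates via injectivity of $\exp_{X(s)}$ near the origin, where the paper invokes bijectivity of $\widetilde f$ onto its range; these amount to the same point.
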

\begin{proof}
  Let $f$ and $\widetilde f$ be maps $f\colon I\times B\to M$ and $\widetilde f\colon \widetilde
  I\times \widetilde B\to M$ as in equation \eqref{eq:fermiDef} that define the two Fermi
  coordinates.
  Let $\psi\colon I\to \widetilde I$ be the reparameterization, so that $\widetilde X\circ \psi =
  X$. Writing out the geodesic equation for $\widetilde X$ and $\widetilde X\circ \psi$ shows that
  $\psi(s)=A s + B$ for some $A\in\setR\slaz$ and $B\in \setR$.
  By assumption, there are $s_0\in I$ and $\widetilde s_0\in \widetilde I$ such that
vectors $\{\pd{}{z^{\mu}}\vert_{X(s_0)}\}_{\mu=1}^{n-1}$ and $\{\pd{}{\widetilde
    z^{\mu}}\vert_{\widetilde X(\widetilde s_0)}\}_{\mu=1}^{n-1}$ span the same hyperplane.
Thus $X(s_0)=\widetilde X(\widetilde s_0)$ and
there exist an invertible matrix $T=(T^\nu_\mu)_{\mu,\nu=1}^{n-1}\in \setR^{(n-1)\times (n-1)}$ such
that
\begin{eqnarray}
\left.    \pd{}{z^{\mu}}\right\vert_{X(s)} &=& \label{eq:jkasd}
\sum_{\nu=1}^{n-1} T_{\mu}^{\nu} \left.\pd{}{\widetilde z^{\nu}}\right\vert_{ X(s)}, \quad \mu\in \{1,\ldots, n-1\}
\end{eqnarray}
at $s=s_0$. Since both sides in this equation are parallel vectors along $X$, and since equality
holds for $s=s_0$, it follows that equation \eqref{eq:jkasd} holds for all $s\in I$.

Suppose $(s, z^1, \ldots, z^{n-1})$ and $(\tilde{s}, \widetilde z^1, \ldots,
\widetilde z^{n-1})$ are Fermi coordinates for the same point, so that
\begin{eqnarray}
\label{eq:FermiRelation1}
  f( s, z^1, \ldots, z^{n-1}) &=&   \widetilde f(\widetilde s, \widetilde z^1, \ldots, \widetilde z^{n-1}).
\end{eqnarray}
Then equations \eqref{eq:fermiDef} and \eqref{eq:jkasd} 
imply that
\begin{eqnarray}
\label{eq:FermiRelation2}
 f( s,  z^1, \ldots,  z^{n-1}) &=&
 \widetilde f\left(As+B, \sum_{\mu=1}^{n-1} T^{1}_{\mu}z^\mu, \ldots,
\sum_{\mu=1}^{n-1} T^{n-1}_{\mu} z^\mu\right).
\end{eqnarray}
Since $f$ is a bijection onto its range, equations
\eqref{eq:FermiRelation1}--\eqref{eq:FermiRelation2} imply that
equation \eqref{eq:FermiAffineRelation} holds.
%
\end{proof}

Let us consider the case when $(M,g)$ is a pseudo-Riemann manifold of index $1$ and $\nabla$ is the
Levi-Civita connection of $g$.  If $\dot{X}(s_0)$ is timelike or spacelike, then a suitable
hyperplane $\Pi$ is given by $\Pi = (\dot{X}(s_0))^\perp$ \cite[p.~49]{ONeill:1983}.  In this case,
the metric can further be made diagonal along the geodesic in Fermi coordinates
\cite{Levi-Civita}. On the other hand, if $\dot{X}(s_0)$ is lightlike, we have $\dot X(s_0)\in (
\dot X(s_0))^\perp$ and the choice $\Pi = (\dot{X}(s_0))^\perp$ is not possible.

\section{Equations of geodesic deviation}
\label{sec:EqsGeodesicDeviation}
In this section we describe three equations for the behavior of nearby geodesics: \emph{the exact
  deviation equation}, \emph{the Jacobi equation}, and \emph{the generalized Jacobi equation}.
Throughout this section we assume that $\nabla$ is an affine and torsion-free connection on $M$.

\subsection{The exact geodesic deviation equation}
\label{sec:ExactDeviation}
Suppose $x\colon I\to M$ and $ X\colon I\to M$ are two geodesics that are contained in one
coordinate chart $(U, x^\mu)$. If locally $x(s)=(x^\mu(s))$ and $X(s)=(X^\mu(s))$ for $s\in I$, let
$\xi\colon I\to \setR^n$ be the displacement between the geodesics defined as
\begin{eqnarray}
\label{eq:xiDef}
 \xi^\mu(s) &=& x^\mu(s) - X^\mu(s), \quad s\in I.
\end{eqnarray}
Since $x$ and $X$ are  solutions to the geodesic equation, it follows that
\begin{eqnarray}
\label{standardexactdeviationequation}
\quad \quad \ddot\xi^{\mu}\,+\,\Gamma^{\mu}_{\nu\sigma}(X+\xi)\,
\Big( \dot X^{\nu}+ \dot \xi^{\nu} \Big)
\Big( \dot X^{\sigma}+ \dot\xi^{\sigma}\Big)
 -\,\,\Gamma^{\mu }_{\nu\sigma}(X)\,
\dot X^{\sigma} \,\dot X^{\nu}=\,0.
\end{eqnarray}
We will refer to \eqref{standardexactdeviationequation} as the \emph{exact geodesic deviation
  equation} (see \cite{Perlick07}).
Equation \eqref{standardexactdeviationequation} is an exact equation in the sense that its
derivation does not involve any approximations. However, the geometric analysis of equations
\eqref{eq:xiDef}--\eqref{standardexactdeviationequation} becomes difficult since both equations are
defined in local coordinates and, moreover, both equations involve two points on the manifold.


That components $\xi^\mu$ defined by equation \eqref{eq:xiDef} do not define a vector field along $X\colon
I\to M$ can be seen as follows. If $\setR^2$ is equipped with the Euclidean metric and Cartesian
coordinates $(x^1, x^2)$, then curves $X(s)=(s,0)$ and $x(s)=(s,1)$ for $s>0$ are geodesics and
functions $\xi^{\mu}\colon I\to \setR^2$ are given by $\xi(s) = (0,1)$.  However, in polar
coordinates $\widetilde x^1 = r, \widetilde x^2 = \theta$, the definition of $\xi$ yields
$\widetilde \xi(s) = (\sqrt{1+s^2}-s, \tan^{-1}(\frac{1}{s}))$. It follows that $\xi^1(s)=
\left(\frac{\partial x^1}{\partial \widetilde x^\mu}\right)\big|_{X(s)}\widetilde \xi^\mu$ is not
satisfied for all $s>0$, and functions $\xi^\mu$ in equation \eqref{eq:xiDef} do not in general
transformation as a tensor.

Even if functions $\xi^\mu(s)$ in equation \eqref{eq:xiDef} do not transform as a tensor in
general, it turns out that if we restrict to suitable Fermi coordinates along a geodesic,
then functions $\xi^0, \ldots, \xi^{n-1}$ transform as a vector.  To see this, suppose $X\colon
I\to M$ is a geodesic and $(s, z^1, \ldots, z^{n-1})$ are Fermi coordinates along $X$ as in {\it
  proposition} \ref{prop:Fermi}. If $x\colon I\to M$ is another geodesic that can be written as
$x(s)=(s,z^1(s), \ldots, z^{n-1}(s))$ in these Fermi coordinates, then functions $\xi^0, \ldots,
\xi^{n-1}$ in equation \eqref{eq:xiDef} are given by
\begin{eqnarray}
\label{eq:FermiXi}
\xi^\mu(s) &=&
\begin{cases}
0, & \mbox{for}\,\, \mu=0,\\
z^\mu(s), & \mbox{for}\,\, \mu\in \{1, \ldots, n-1\}.
\end{cases}
\end{eqnarray}
If $(\widetilde s, \widetilde z^1, \ldots, \widetilde z^{n-1})$ are other Fermi coordinates along
$X$ as in {\it proposition} \ref{prop:FermiTrans} and $\widetilde \xi^\mu$ are defined by equation
\eqref{eq:xiDef} in these coordinates, then equations \eqref{eq:fermiDef} and
\eqref{eq:FermiAffineRelation} show that $\xi^\mu$ and $\widetilde \xi^\mu$ transform as a vector
along $X$.

Since we have not made any approximations when deriving equation
\eqref{standardexactdeviationequation}, it turns out that we can transform solutions from one
coordinate system to another. However, this will lead to a highly non-standard transformation rule.
To see this, suppose $X^\mu(s)$ and $x^\mu(s)$ are geodesics as above in coordinates $x^\mu$ whence
$\xi^\mu(s)$ in equation \eqref{eq:xiDef} solve equation \eqref{standardexactdeviationequation}.
Thus, if $\widetilde x^\mu$ are overlapping coordinates and $\widetilde X^\mu(s)$ and $\widetilde
x^\mu(s)$ represent the same geodesics in these coordinates, then functions $\widetilde \xi^\mu(s)
= \widetilde x^\mu(s) - \widetilde X^\mu(s)$ solve equation \eqref{standardexactdeviationequation}
in $\widetilde x^\mu$-coordinates.  Then
\begin{eqnarray}
\label{eq:nonLinearTransForEGDE}
 \widetilde \xi^\mu(s)
& =& (\widetilde x\circ x^{-1})^\mu \left(\xi^\lambda(s) + X^\lambda(s)\right) - \widetilde X^\mu(s),
\end{eqnarray}
and the above equation gives a transformation rule $\xi^\mu\mapsto \widetilde
\xi^\mu$ for solutions to the exact deviation equation along a geodesic $X\colon I\to M$.

\subsection{The Jacobi equation}
\label{sec:JacobiEq}
The usual approach to analyze the qualitative behavior of nearby geodesics is by using the \emph{Jacobi
  equation}, which describes the displacement $\xi^\mu$ between two geodesics under the
assumption that $\xi^\mu$ and $\dot \xi^\mu$ are infinitesimal \cite{Levi-Civita}.

Let us show how the Jacobi equation follows from equation \eqref{standardexactdeviationequation}.
Using Taylor's formula, we can expand each connection coefficients as
\begin{eqnarray}
\label{eq:gammaLin}
   \Gamma^{\mu}_{\nu\sigma}(X + \xi) &=&    \Gamma^{\mu}_{\nu\sigma}(X) + \pd{\Gamma^{\mu}_{\nu\sigma}}{x^{\tau}}(X) \xi^{\tau}\,+ \mbox{higher order terms.}
\end{eqnarray}
Inserting equation \eqref{eq:gammaLin} into equation \eqref{standardexactdeviationequation} and
assuming that  $\xi^\mu$, $\dot \xi^\mu$ are infinitesimal
yields
\begin{eqnarray}
\ddot \xi^{\mu}\,+\pd{\Gamma^{\mu}_{\nu\sigma}}{x^\tau} (X)\,\xi^{\tau}
\dot X^{\nu}\,\dot X^{\sigma}
+2\,\Gamma^{\mu }_{\nu\sigma}(X)\,\dot X^{\sigma} \dot\xi^{\nu}&=&\,0.
\label{Jacobiequation}
\end{eqnarray}
That is, in the above we assume that functions $\xi^\mu$ are such that  all  higher order terms
$\xi^\mu \xi^\nu$,
$\dot \xi^\mu  \xi^\nu$,
$\dot \xi^\mu \dot \xi^\nu$,
$\xi^\mu \xi^\nu \xi^\sigma$,
$\dot \xi^\mu \xi^\nu \xi^\sigma$,
$\ldots$
can be neglected.
Let us emphasize that due to this assumption on $\xi^\mu$,
we can no longer treat $\xi$ in equation \eqref{Jacobiequation} as the exact
displacement between two geodesics as in equation \eqref{eq:xiDef}.

Equation \eqref{Jacobiequation} is known as the \emph{Jacobi equation} for an affine and
torsion-free connection $\nabla$.  Using the covariant derivative $\frac{D}{Ds}$ along $X$, equation
\eqref{Jacobiequation} can equivalently be rewritten as 
\begin{eqnarray}
    \frac{D^2}{D s^2} \xi + R(\xi,\dot{X})(\dot{X}) &=& 0,
 \label{covariantformjacobiequation}
\end{eqnarray}
where $R(\xi,\dot{X})$ the curvature endomorphism of $\nabla$ and $\xi(s) = \xi^\mu(s)
\pd{}{x^\mu}\vert_{X(s)}$. See for example \cite{Perlick07}.  From equations
\eqref{Jacobiequation}--\eqref{covariantformjacobiequation} we see that the assumptions on $\xi^\mu$
and $\dot \xi^\mu$ have simplified the exact geodesic deviation equation in three significant ways:
First, unlike the exact geodesic deviation equation, equations \eqref{Jacobiequation} and
\eqref{covariantformjacobiequation} are linear equations in functions $\xi^\mu$.
Second, as we will see below, equations \eqref{Jacobiequation} and \eqref{covariantformjacobiequation}
will be covariant when functions $\xi^\mu$ transform as components of a tensor. This is a
simplification when compared with the nonlinear transformation rule \eqref{eq:nonLinearTransForEGDE} for solutions
to the exact geodesic deviation equation.
Lastly, equations \eqref{Jacobiequation} and \eqref{covariantformjacobiequation} are equations along
$X$ that only involve evaluations at $X(s)$ on $M$.

In the next sections we will study the coordinate invariance of the generalized Jacobi
equation.  As a model for this analysis and to fix notation, let us consider in some detail the
coordinate invariance for the Jacobi equation. Suppose $(U,x^\mu)$ are local coordinates for $M$ and
$\xi=\xi^\mu\left.\pd{}{x^\mu}\right\vert_{X(s)}$ is a curve $\xi\colon I\to TU$ along a geodesic
$X\colon I\to M$. For $\mu\in \{0,\ldots, n-1\}$  we define
\begin{eqnarray}
\label{eq:Jdef}
J^\mu_U[(\xi^\lambda)_{\lambda=0}^{n-1}] &=& \ddot\xi^\mu
+\pd{\Gamma^{\mu}_{\nu\sigma}}{x^\tau} \,  \xi^\tau \dot X^\nu \dot X^\sigma
+2\, \Gamma^{\mu }_{\nu\sigma}\, \dot X^\sigma \dot \xi^\nu.
\end{eqnarray}
That is, the right hand side is the differential operators that appears in the Jacobi equation. To
simplify the notation we will also write
$J^\mu_U[(\xi^\lambda)_{\lambda=0}^{n-1}]=J^\mu_U[\xi^\lambda]$.  We say that a curve $\xi\colon
I\to TU$ is a \emph{Jacobi field} in $U$ if $\pi\circ \xi\colon I\to U$ is a geodesic and
$J^\mu_U[\xi^\lambda]=0$
for all
$\mu\in \{0,\ldots, n-1\}$.

If  $(\widetilde U, \widetilde x^\mu)$ are overlapping
coordinates then
\begin{eqnarray}
\label{eq:jacobiTrans}
J^\mu_{\widetilde U}\left[
\pd{\widetilde x^\lambda}{x^\rho} \xi^\rho
\right] &=& \pd{\widetilde x^\mu}{ x^\sigma}   J^\sigma_U
\left[
\xi^\lambda 
\right].
\end{eqnarray}
Thus, if we assume that $\xi^\mu$ are components for a vector field along $X$, then
the definition of a Jacobi field does not depend on the choice of coordinates.

There are also other ways to derive the Jacobi equation. One approach is to start with a
\emph{geodesic variation} around a geodesic $X\colon I\to M$. That is, a map
$\Lambda\colon (-\varepsilon_0,\varepsilon_0) \times\, I \to M$ such that
\begin{enumerate}
\item $s\mapsto \Lambda(\varepsilon,s)$ is a geodesic for each $\varepsilon\in (-\varepsilon_0, \varepsilon_0)$,
\item $\Lambda(0,s)=X(s)$.
\end{enumerate}
Then one can show that the tangent of $\Lambda$ in the $\varepsilon$-direction defines a vector
field $\xi\colon I\to TM$, $\xi(s) = \partial_\varepsilon \Lambda(\varepsilon,
s)\vert_{\varepsilon=0}$ along $X$, and moreover, the components of $\xi$ solves the Jacobi equation
\eqref{Jacobiequation}.  Conversely, any solutions to the Jacobi equation on a compact interval, can
be written as $\xi(s) = \partial_\varepsilon \Lambda(\varepsilon, s)\vert_{\varepsilon=0}$ for a
geodesic variation $\Lambda$. See for example \cite{BucataruDahl:2008}.
An advantage of this derivation is that $\xi$ a tangent vector by definition, and the derivation
does not involve any assumptions like $\xi^\mu$ and $\dot \xi^\mu$ beeing small or
infinitesimal.

\subsection{The generalized Jacobi equation}
\label{sec:GenJacobiEq}
In the previous section we started with the exact geodesic deviation
equation \eqref{standardexactdeviationequation}, did a Taylor expansion of
$\Gamma^\mu_{\nu\sigma}(X+\xi)$ (equation \eqref{eq:gammaLin}) and assumed that $\xi^\mu$
and $\dot \xi^\mu$ are infinitesimal, so that
higher order terms
$\xi^\mu \xi^\nu$, $\dot \xi^\mu \xi^\nu$, $\dot \xi^\mu \dot \xi^\nu$, $\xi^\mu \xi^\nu
\xi^\sigma$, $\dot \xi^\mu \xi^\nu \xi^\sigma$, $\dot \xi^\mu \dot \xi^\nu \xi^\sigma$, $\ldots$ can
be neglected. This gives rise to the Jacobi equation \eqref{Jacobiequation}.
The generalized Jacobi equation 
is derived in the same way, but assuming
that only  $\xi^\mu$ is infinitesimal.
Under this approximation,
%
%
equation \eqref{standardexactdeviationequation} simplifies into the generalized Jacobi equation
in equation \eqref{eq:genJacobiEq}.

Suppose $(U, x^{\mu})$ are local coordinates for $M$, and $\xi^{0}, \ldots, \xi^{n-1}\colon I\to
\setR$ are functions along a geodesic $X\colon I\to U$.  As for the Jacobi equation we define
differential operators
\begin{eqnarray*}
 G_U^\mu[\xi^\lambda]\! &=& \!\ddot \xi^\mu
\,+\, \Gamma^\mu_{\nu\sigma} \left( 2 \dot \xi^\nu \dot X^\sigma + \dot \xi^\nu \dot \xi^\sigma \right)
\,+\, \pd{\Gamma^\mu_{\nu\sigma}}{x^\alpha} \xi^\alpha \left( \dot X^\nu + \dot \xi^\nu\right) \left( \dot X^\sigma + \dot \xi^\sigma\right),
\end{eqnarray*}
for $\mu\in \{0,\ldots, n-1\}$.
%
We will say that functions
$\xi^{0}\ldots, \xi^{n-1}\colon I\to \setR$ \emph{define a generalized Jacobi field in chart
  $U$} if $G^\mu_U[\xi^\lambda]=0$ for $\mu\in \{0,\ldots, n-1\}$.

Suppose $X$ is a geodesic $X\colon I\to M$. Then the Jacobi equation along $X$ is locally a linear
second order differential equation for components $\xi^{\mu}$.  Thus, for initial values
$\xi^\mu(s_0)$ and $\dot \xi^\mu (s_0)$ there exists a unique Jacobi field $\xi\colon I\to TM$ along $X$ with
these initial values.  See for example \cite{Lee:Curvature}. In contrast, the generalized Jacobi
equation $G^\mu_U[\xi^\lambda]=0$ is locally a \emph{non-linear} second order differential equation for
functions $\xi^{\mu}$.  The Picard-Lindel\"of theorem implies that for any $s_0\in I$, there exists
an neighborhood around $s_0$ such that the generalized Jacobi is uniquely solvable on this
neighborhood from initial values $\xi^{\mu}(s_0)$ and $\dot\xi^{\mu}(s_0)$. One can bound the size
of this neighborhood \cite[Theorem 1.1, Chapter 2]{Hartman:1964}. However, in general there is no
guarantee that a unique solution exists on the entire interval $I$ as for Jacobi fields.

For future reference, let us note that differential operators $G^{\mu}_U[\xi^\lambda]$ and
$J^{\mu}_U[\xi^\lambda]$ are related by
\begin{eqnarray}
\label{eq:GUdecomp}
    G^\mu_U[\xi^\lambda] &=& J^\mu_U[\xi^\lambda] + \Delta^\mu_U[\xi^\lambda],
\end{eqnarray}
where
\begin{eqnarray}
\label{eq:DeltaDef}
\Delta^\mu_U[\xi^\lambda] &=&  \Gamma^{\mu }_{\nu\sigma}\, \dot \xi^\nu \dot \xi^\sigma
+ 2 \pd{\Gamma^{\mu}_{\nu\sigma}}{x^\alpha} \,  \xi^\alpha \dot X^\nu \dot \xi^\sigma
+  \pd{\Gamma^{\mu}_{\nu\sigma}}{x^\alpha} \,  \xi^\alpha \dot \xi^\nu \dot \xi^\sigma.
\end{eqnarray}

\section{Coordinate invariance of the generalized Jacobi equation}
\label{sec:GJE-transformation}
As described in the introduction, this section contains the main results of this note.  As in
Section \ref{sec:EqsGeodesicDeviation} we assume that $M$ is a manifold with an affine and
torsion-free connection $\nabla$.

\begin{proposition}
\label{thm:pos}
Suppose $X\colon I\to M$ is a geodesic contained in a chart $(U, x^{\mu})$ and
functions $\xi^0, \ldots, \xi^{n-1}\colon I\to \setR$ define a generalized Jacobi field in
coordinates $x^\mu$.  If $\widetilde x^\mu$ are coordinates defined by the affine coordinate transformation
\eqref{eq:affineTrans}, then functions $\pd{\widetilde x^\mu}{x^\lambda} \xi^\lambda
\colon I\to
\setR$ define a generalized Jacobi field in coordinates $\widetilde x^\mu$.
\end{proposition}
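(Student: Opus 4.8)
The plan is to show that the differential operator $G^\mu_U$ transforms tensorially under affine coordinate changes, i.e. that
\[
G^\mu_{\widetilde U}\left[\pd{\widetilde x^\lambda}{x^\rho}\xi^\rho\right] = \pd{\widetilde x^\mu}{x^\sigma} G^\sigma_U[\xi^\lambda],
\]
which immediately gives the claim since the left side vanishes when the right side does. Using the decomposition $G^\mu_U = J^\mu_U + \Delta^\mu_U$ from \eqref{eq:GUdecomp}, and the already-established tensorial transformation \eqref{eq:jacobiTrans} for $J^\mu_U$, it suffices to prove the analogous identity for $\Delta^\mu_U$ alone. This is where the affine hypothesis does the work: when $\widetilde x^\mu = \Lambda^\mu_\nu x^\nu + C^\mu$, the Jacobians $\pd{\widetilde x^\mu}{x^\nu} = \Lambda^\mu_\nu$ are constant, so $\widetilde\xi^\mu := \Lambda^\mu_\rho\xi^\rho$ satisfies $\dot{\widetilde\xi}^\mu = \Lambda^\mu_\rho\dot\xi^\rho$ with no extra terms, and likewise $\dot{\widetilde X}^\mu = \Lambda^\mu_\rho\dot X^\rho$. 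Moreover the second-derivative term in \eqref{eq:gammaSymbTrans} drops out, leaving the homogeneous tensor rule $\Gamma^\alpha_{\nu\sigma} = \widetilde\Gamma^\lambda_{\beta\gamma}\Lambda^\beta_\nu\Lambda^\gamma_\sigma (\Lambda^{-1})^\alpha_\lambda$, equivalently $\Lambda^\mu_\alpha\Gamma^\alpha_{\nu\sigma} = \widetilde\Gamma^\mu_{\beta\gamma}\Lambda^\beta_\nu\Lambda^\gamma_\sigma$.

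Concretely I would proceed term by term through the three summands of $\Delta^\mu_U$ in \eqref{eq:DeltaDef}. For the first term, $\Lambda^\mu_\alpha\,\Gamma^\alpha_{\nu\sigma}\dot\xi^\nu\dot\xi^\sigma$: insert $\dot\xi^\nu = (\Lambda^{-1})^\nu_a\dot{\widetilde\xi}^a$ and similarly for $\dot\xi^\sigma$, then use the homogeneous $\Gamma$-rule to collapse $\Lambda^\mu_\alpha\Gamma^\alpha_{\nu\sigma}(\Lambda^{-1})^\nu_a(\Lambda^{-1})^\sigma_b = \widetilde\Gamma^\mu_{ab}$, obtaining $\widetilde\Gamma^\mu_{ab}\dot{\widetilde\xi}^a\dot{\widetilde\xi}^b$. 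For the terms involving $\pd{\Gamma^\mu_{\nu\sigma}}{x^\alpha}$, one additionally needs $\pd{}{x^\alpha} = \Lambda^\beta_\alpha\pd{}{\widetilde x^\beta}$ (again constant Jacobian, so differentiating the $\Gamma$-transformation rule introduces no correction), so that $\Lambda^\mu_\kappa\,\pd{\Gamma^\kappa_{\nu\sigma}}{x^\alpha}$ transforms as a $(1,3)$-tensor density in exactly the way needed to pair against $\xi^\alpha\dot X^\nu\dot\xi^\sigma$ and $\xi^\alpha\dot\xi^\nu\dot\xi^\sigma$ and reproduce $\pd{\widetilde\Gamma^\mu_{ab}}{\widetilde x^c}\widetilde\xi^c\dot{\widetilde X}^a\dot{\widetilde\xi}^b$ etc. Summing the three transformed terms yields $\Lambda^\mu_\sigma\Delta^\sigma_U[\xi^\lambda] = \Delta^\mu_{\widetilde U}[\widetilde\xi^\lambda]$.

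The main obstacle is not conceptual but bookkeeping: one must verify that differentiating the tensorial transformation rule for $\Gamma$ (in the affine case) truly produces the naive tensor rule for $\pd{\Gamma}{x}$, i.e. that $\pd{\Gamma^\mu_{\nu\sigma}}{x^\alpha}$ behaves like a tensor here precisely because the Jacobians are constant — this is false for general coordinate changes and is exactly what confines Proposition \ref{thm:pos} to affine transformations (and, via Proposition \ref{prop:FermiTrans}, to Fermi systems sharing an initial hyperplane). A minor point worth stating explicitly is that $\pi\circ\widetilde\xi$ remains a geodesic: this follows because an affine change of coordinates sends geodesics to geodesics (the second-derivative term in \eqref{eq:gammaSymbTrans} vanishes, so \eqref{eq:geodEq} is preserved), which we may also borrow from the reparameterization argument already used in the proof of Proposition \ref{prop:FermiTrans}. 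Once the identity for $\Delta$ is in hand, combining with \eqref{eq:jacobiTrans} and \eqref{eq:GUdecomp} finishes the proof.
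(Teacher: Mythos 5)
Your proposal is correct and follows essentially the same route as the paper: decompose $G^\mu_U=J^\mu_U+\Delta^\mu_U$, invoke the tensoriality \eqref{eq:jacobiTrans} of $J^\mu_U$, and check that $\Delta^\mu_U$ transforms tensorially because the affine change has constant Jacobian, so $\dot\xi^\mu$, $\Gamma^\mu_{\nu\sigma}$ and $\pd{\Gamma^\mu_{\nu\sigma}}{x^\alpha}$ all obey the homogeneous tensor rules (the paper packages these as equations \eqref{eq:transC_2}--\eqref{eq:transC_4}, whose correction terms vanish identically for affine maps). Your explicit term-by-term verification of the $\Delta$ identity is exactly the content the paper delegates to Appendix \ref{app:transRules}.
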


\begin{proof}
Let $\ttt \xi^\mu = \pd{\widetilde x^\mu}{x^\sigma} \xi^\sigma$.
Equations
  \eqref{eq:transC_2}--\eqref{eq:transC_4} in Appendix \ref{app:transRules} show that
  $\pd{\widetilde x^\mu}{ x^\nu} \Delta^\nu_U[\xi^\lambda] = \Delta^\mu_{\widetilde U}[\ttt \xi^\lambda]$.
By equations \eqref{eq:jacobiTrans} and \eqref{eq:GUdecomp} we then have
 $\pd{\widetilde x^\mu}{ x^\nu} G^\nu_U[\xi^\lambda] = G^\mu_{\widetilde
    U}[\ttt \xi^\lambda]$ and the claim follows.
\end{proof}

\begin{proposition}
\label{thm:neg}
Suppose $M$ is a manifold of dimension $\ge 3$, $X\colon I\to M$ is a geodesic and $(U,x^\mu)$ are
local coordinates around $X(s_0)$ for some $s_0\in I$.  By shrinking $I$ to a neighborhood of $s_0$
we can find functions $\xi^\mu\colon I\to \setR$ and overlapping coordinates $(\widetilde U,
\widetilde x^\mu)$ around $X(s_0)$ such that $\xi^\mu$ define a generalized Jacobi field in chart
$U$, but $\pd{\widetilde x^\mu}{x^\nu} \xi^\nu$ does not define a generalized Jacobi field in chart
$\widetilde U$.
\end{proposition}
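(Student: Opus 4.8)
The plan is to localize the failure of tensoriality in the ``extra'' term of the decomposition $G^\mu_U=J^\mu_U+\Delta^\mu_U$ from \eqref{eq:GUdecomp}, and then to break it with a change of coordinates that is affine to second order but not to third order at the base point; by Proposition~\ref{thm:pos} nothing affine can work. Write $\widetilde\xi^\mu=\pd{\widetilde x^\mu}{x^\lambda}\xi^\lambda$. If $\xi^\mu$ is a generalized Jacobi field in $U$, then $J^\nu_U[\xi^\lambda]=-\Delta^\nu_U[\xi^\lambda]$, so applying \eqref{eq:GUdecomp} in $\widetilde U$ together with the tensoriality \eqref{eq:jacobiTrans} of $J$ gives
\[
 G^\mu_{\widetilde U}[\widetilde\xi^\lambda]
 \;=\; J^\mu_{\widetilde U}[\widetilde\xi^\lambda]+\Delta^\mu_{\widetilde U}[\widetilde\xi^\lambda]
 \;=\; \Delta^\mu_{\widetilde U}[\widetilde\xi^\lambda]-\pd{\widetilde x^\mu}{x^\nu}\,\Delta^\nu_U[\xi^\lambda].
\]
Hence it suffices to exhibit a generalized Jacobi field $\xi$ in $U$ and an overlapping chart $\widetilde U$ for which this right-hand side is nonzero at one parameter value.

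Fix $s_0\in I$, set $p=X(s_0)$, and let $(x_p^\mu)$ be its coordinates in the given chart. Choose an index $k$ with $\dot X^k(s_0)\neq 0$ (possible since $X$ is a regular curve) and put $\widetilde x^0=x^0+\tfrac16(x^k-x_p^k)^3$ and $\widetilde x^\mu=x^\mu$ for $\mu\neq 0$. This is a diffeomorphism of $U$ onto its image, hence defines an overlapping chart $(\widetilde U,\widetilde x^\mu)$, and $\widetilde x^\mu$ agrees with $x^\mu$ through second derivatives at $p$; its only nonzero third-order Taylor coefficient is $c^0_{kkk}=1$, and we write $(c^\mu_{\alpha\beta\gamma})$ for this fully symmetric array. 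Because of the second-order agreement, at $s=s_0$ the Jacobian $\pd{\widetilde x^\mu}{x^\nu}$ is the identity, and one checks directly that $\widetilde\xi^\mu(s_0)=\xi^\mu(s_0)$, $\dot{\widetilde\xi}^\mu(s_0)=\dot\xi^\mu(s_0)$, $\dot{\widetilde X}^\mu(s_0)=\dot X^\mu(s_0)$ and $\widetilde\Gamma^\mu_{\nu\sigma}(p)=\Gamma^\mu_{\nu\sigma}(p)$, while differentiating the transformation rule \eqref{eq:gammaSymbTrans} once and evaluating at $p$ gives $\pd{\widetilde\Gamma^\mu_{\nu\sigma}}{\widetilde x^\alpha}(p)=\pd{\Gamma^\mu_{\nu\sigma}}{x^\alpha}(p)-c^\mu_{\nu\sigma\alpha}$. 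Substituting all of this into formula \eqref{eq:DeltaDef} for $\Delta_U$ and $\Delta_{\widetilde U}$, the $\Gamma$- and $\partial\Gamma$-contributions of the original chart cancel, and the identity above collapses at $s_0$ to
\[
 G^\mu_{\widetilde U}[\widetilde\xi^\lambda](s_0)\;=\;-\,c^\mu_{\alpha\nu\sigma}\,\xi^\alpha(s_0)\,\big(2\dot X^\nu(s_0)+\dot\xi^\nu(s_0)\big)\,\dot\xi^\sigma(s_0).
\]

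To finish, choose the generalized Jacobi field. By the Picard--Lindel\"of theorem, after shrinking $I$ to a neighborhood of $s_0$ (so that, in addition, $X(I)\subset U$) there is a generalized Jacobi field $\xi^\mu$ in $U$ with the prescribed initial values $\xi^\mu(s_0)=\delta^\mu_k$ and $\dot\xi^\mu(s_0)=\dot X^\mu(s_0)$. For $\mu=0$ the expression above then reduces to $-3\big(\dot X^k(s_0)\big)^2\neq 0$, so $\widetilde\xi^\mu$ fails the generalized Jacobi equation in $\widetilde U$ at $s_0$, which is the assertion. Since only the coordinates indexed by $0$ and $k$ enter the construction, it applies for every $n\ge 3$.

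I expect the one genuinely delicate point to be the computation of $\pd{\widetilde\Gamma^\mu_{\nu\sigma}}{\widetilde x^\alpha}(p)$ from \eqref{eq:gammaSymbTrans}: one differentiates the identity, expands by the product rule, and must argue carefully that the second-order agreement of the two coordinate systems annihilates every term except the one carrying a third derivative of the transition map, leaving exactly $-c^\mu_{\nu\sigma\alpha}$. The remaining ingredients --- the reduction via \eqref{eq:GUdecomp} and \eqref{eq:jacobiTrans}, the values of $\widetilde\xi,\dot{\widetilde\xi},\dot{\widetilde X},\widetilde\Gamma$ at $p$, and the final substitution into \eqref{eq:DeltaDef} --- are routine bookkeeping.
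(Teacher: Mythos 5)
Your proof is correct and follows essentially the same route as the paper's: both isolate the non-tensorial defect via the decomposition $G^\mu_U=J^\mu_U+\Delta^\mu_U$ and the tensoriality \eqref{eq:jacobiTrans} of $J^\mu_U$, and both break it with a cubic coordinate perturbation agreeing with the identity to second order at $p$, so that the residual at $s_0$ is the contraction of the third derivative of the transition map with $\xi^\alpha\bigl(2\dot X^\nu+\dot\xi^\nu\bigr)\dot\xi^\sigma$. The only differences are your sparser choices --- a single cubic term $c^0_{kkk}=1$ together with the initial data $\xi^\mu(s_0)=\delta^\mu_k$, $\dot\xi^\mu(s_0)=\dot X^\mu(s_0)$ --- versus the paper's $T^\mu_{\tau\rho\sigma}=\delta^\mu_\tau\delta_{\rho\sigma}+\delta^\mu_\rho\delta_{\tau\sigma}+\delta^\mu_\sigma\delta_{\tau\rho}$ and an orthonormal triple $u,v,w$, which is where the paper (but not your argument) actually uses $\dim M\ge 3$.
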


\begin{proof}
Let $p=X(s_0)$.
By \thCite{proposition} \ref{thm:pos} we may assume that $0\in \setR^n$
corresponds to $p$ in coordinates $x^\mu$.
Let $(\ttt U, \ttt x^\mu)$ be the coordinates determined by
\begin{eqnarray*}
\ttt x^{\mu}(x^0, \ldots,  x^{n-1}) &=& x^{\mu}+ \frac 1 6 T^{\mu}_{\tau\rho\sigma}  x^{\tau}  x^{\rho}  x^{\sigma},
\end{eqnarray*}
where
\begin{eqnarray*}
T^\mu_{\tau\rho \sigma} &=&
    \delta^\mu_\tau \delta_{\rho \sigma}
+ \delta^\mu_\rho \delta_{\tau \sigma}
+ \delta^\mu_\sigma\delta_{\tau\rho},
\end{eqnarray*}
and $\delta^\mu_\nu$ and $\delta_{\mu\nu}$ are the Kronecker delta symbols.

Let $u=\dot{X}(s_0)$.  Since $\dim M\ge 3$ we can find vectors $v,w \in T_{p}M$ such that
$\{u,v,w\}$ is an orthonormal basis with respect to the Euclidean metric $g=\delta_{\mu\nu}
dx^\mu\otimes dx^\nu$ on $U$.  By the Picard-Lindel\"of theorem we can shrink $I$ to a neighborhood
of $s_0$ and find functions $\xi^\mu \colon I \to \setR$ such that $\xi^\mu$ define a generalized
Jacobi field in chart $U$ and
$$
  \dot X^\mu(s_0)=u^\mu, \quad
  \xi\,^{\mu}(s_0) =  v^{\mu},
  \quad \dot{\xi}\,^{\mu} (s_0)=w^{\mu},
$$
when
$u= u^{\mu}\pd{}{ x^{\mu}}\vert_p$,
$v= v^{\mu}\pd{}{ x^{\mu}}\vert_p$
and $w= w^{\mu}\pd{}{ x^{\mu}}\vert_p$.
%
Let $\ttt \xi^\mu = \pd{\widetilde x^\mu}{x^\sigma} \xi^\sigma$.

Contracting $G^{\nu}_{\ttt U}[\ttt \xi^\lambda ]=J^{\nu}_{ \ttt U}[\ttt \xi^\lambda ] +\Delta^{\nu}_{\ttt U}[\ttt \xi^\lambda]$  
by $\pd{ x^{\mu}}{\ttt x^{\nu}}$, applying equations \eqref{eq:jacobiTrans}--\eqref{eq:GUdecomp}
%
and equations \eqref{eq:transC_2}--\eqref{eq:transC_4} in Appendix
\ref{app:transRules} yields
\begin{eqnarray*}
  \pd{ x^{\mu}}{\ttt x^{\nu}}
G^{\nu}_{ \ttt U}[\ttt \xi^\lambda]
&=&  J^{\mu}_{  U}[\xi^\lambda] +
 \pd{  x^{\mu}}{ \ttt x^{\nu}} \Delta^{\nu}_{ \ttt U}[\ttt \xi^\lambda] \\
&=&\pd{ x^{\mu}}{\ttt x^{\nu}} \Delta^{\nu}_{ \ttt U}[\ttt \xi^\lambda] - \Delta^{\mu}_{ U }[\xi^\lambda] \\
&=& - \pd{x^\mu}{\ttt x^\nu} \frac{\partial ^3 \ttt x^{\nu}}{\partial  x^{\tau}\partial x^{\rho}\partial x^{\sigma}}  \xi^{\tau} \left( 2\dot{X}\, ^{\rho} + \dot {\xi}\,^{\rho}\right)  \dot {\xi}\,^{\sigma},
\end{eqnarray*}
where all expressions are evaluated at $s_0$.
Since $T^{\mu}_{\tau\rho\sigma}$ is symmetric in $\tau\rho\sigma$,
we have
$ \frac{\partial ^3 \ttt x^{\mu}}{\partial  x^{\tau}\partial x^{\rho}\partial x^{\sigma}}  = T^{\mu}_{\tau\rho\sigma}$.
Moreover, since $\pd{x^{\mu}}{\ttt x^{\nu}}(p) = \delta^{\mu}_{\nu}$ and since $u,v,w$ are orthonormal it follows that
\begin{eqnarray*}
G^{\mu}_{ \ttt U}[\ttt{\xi}\,\!^\lambda] &=& -g(2u+w, w)\,  v^{\mu}
-g(v, w)\, (2u^{\mu}  +  w^{\mu} )
-g(v, 2u+ w) \,  w^{\mu}  \\
&=& -v^{\mu}
\end{eqnarray*}
at $s_0$. We have shown that $G^\mu_{\ttt U}[\ttt \xi^\lambda]\neq 0$ at $s_0$. Thus functions $\ttt \xi^\mu$
do not define a generalized Jacobi field in chart $\widetilde U$, and the claim follows.
\end{proof}

\subsection*{Acknowledgements}
MD was funded by the Academy of Finland (project 13132527) and by the Institute of Mathematics at
Aalto University. RGT was funded by FAPESP, process 2010/11934-6.

\appendix

\section{Coordinate transformations}
\label{app:transRules}
Suppose $X$ is a curve $X\colon I\to M$,  and
$x^{\mu}$ and $\ttt x^{\mu}$ are coordinates around $X(s_0)$ with $\pdd{
\ttt x^{\mu}}{ x^{\nu}}{ x^{\rho}}(X(s_0))=0$ for some $s_0\in I$.
If $\xi^\mu$ are functions $\xi^\mu\colon I\to \setR$ and functions $\ttt \xi^\mu$ are defined by
$\ttt \xi^{\mu} =  \pd{\ttt x^{\mu}}{ x^{\nu}}{ \xi}^{\nu}$, then at $X(s_0)$ we have transformation rules
\begin{eqnarray}
\dot \xi^{\mu} &=& \label{eq:transC_2}
\pd{x^{\mu}}{\widetilde x^{\nu}}   \dot {\widetilde  \xi}\,\!^{\nu},\\
\pd{\widetilde x^{\mu}}{x^{\sigma}}    \Gamma^{\sigma}_{\rho\nu}       &=& \label{eq:transC_3}
\widetilde \Gamma^{\mu}_{\sigma\lambda}   \pd{\widetilde x^{\sigma}}{x^{\rho}}  \pd{\widetilde x^{\lambda}}{x^{\nu}}, \\
\pd{\widetilde x^{\mu}}{x^{\sigma}}  \pd{  \Gamma^{\sigma}_{\nu\rho}}{x^{\lambda}}       &=& \label{eq:transC_4}
\pd{\widetilde \Gamma^{\mu}_{\epsilon\sigma} }{\widetilde x^{\delta}} \pd{\widetilde x^{\delta}}{x^{\lambda}} \pd{\widetilde x^{\sigma}}{x^{\nu}}  \pd{\widetilde x^{\epsilon}}{x^{\rho}}
+  \frac{\partial ^3 \ttt x^{\mu}}{\partial  x^{\nu}\partial x^{\rho}\partial x^{\lambda}} .
\end{eqnarray}
These equations show that between coordinates $x^{\mu}$ and $\widetilde x^{\mu}$, objects $\xi^{\mu},
\dot \xi^{\mu}$, $\Gamma^{\mu}_{\rho \nu}$ and $\pd{\Gamma^{\mu}_{\rho\nu}}{x^{\sigma}}$ transform
as tensors at $X(s_0)$ (up to an extra term in the last equation).  For a general coordinate
transformation, these transformation rules 
are considerably more involved.
See equation \eqref{eq:gammaSymbTrans}.


\providecommand{\bysame}{\leavevmode\hbox to3em{\hrulefill}\thinspace}
\providecommand{\MR}{\relax\ifhmode\unskip\space\fi MR }
\providecommand{\MRhref}[2]{%
  \href{http://www.ams.org/mathscinet-getitem?mr=#1}{#2}
}
\providecommand{\href}[2]{#2}

\end{document}